\documentclass[11pt]{article}

\usepackage{graphicx, verbatim}
\usepackage{amssymb}
\usepackage{alltt}
\usepackage{amsmath, amssymb, amsfonts, amscd, xspace, pifont, amsthm}
\usepackage{mathrsfs}
\usepackage{algorithm}
\newtheorem{theorem}{Theorem}

\newtheorem{proposition}{Proposition}
\newtheorem{corollary}{Corollary}

\usepackage[numbers]{natbib}
\usepackage{enumitem}   



\def\threeImages#1#2#3#4#5#6#7#8#9 
{
	\centerline{\hfill\makebox[#2]{#3}\hfill\makebox[#5]{#6}\hfill\makebox[#8]{#9}\hfill}
	\centerline{\hfill
		\includegraphics[width=#2]{#1}
		\hfill
		\includegraphics[width=#5]{#4}
		\hfill
		\includegraphics[width=#8]{#7}
		\hfill}
}

\def\twoImages#1#2#3#4#5#6 
{
	\centerline{\hfill\makebox[#2]{#3}\hfill\makebox[#5]{#6}\hfill}
	\centerline{\hfill
		\includegraphics[width=#2]{#1}
		\hfill
		\includegraphics[width=#5]{#4}
		\hfill}
}
\theoremstyle{remark}

\textwidth = 6.5 in \textheight = 9 in \oddsidemargin = 0.0 in
\evensidemargin = 0.0 in \topmargin = 0.0 in \headheight = 0.0 in
\headsep = 0.0 in
\parskip = 0.2 in
\parindent = 0.0 in

\begin{document}
	\title{An Effective Bernstein-type Bound on Shannon Entropy over Countably Infinite Alphabets}
	\author{
		Yunpeng Zhao \thanks{School of Mathematical and Natural Sciences,
			Arizona State University, AZ, 85306. \texttt{Email:} yunpeng.zhao@asu.edu.} 
	} 
	
	\maketitle
	
	\begin{abstract}
		We prove a Bernstein-type bound for the difference between the average of negative log-likelihoods of independent discrete random variables and the Shannon entropy, both defined on a countably infinite alphabet. The result holds for the class of discrete random variables with tails  lighter than or on the same order of a discrete power-law distribution. Most commonly-used discrete distributions such as the Poisson distribution, the negative binomial distribution, and the power-law distribution itself belong to this class. The bound is effective in the sense that we provide a method to compute the constants in it. 
		
	\end{abstract}
	{\it Keywords:}  Concentration inequality; Bernstein-type bound; effective bound; Shannon entropy; countably infinite alphabet; moment generating function
	
	\section{Introduction}
	Concentration inequalities provide powerful tools for many subjects including  information theory \cite{raginsky2012concentration}, algorithm analysis \cite{dubhashi2009concentration} and statistics \cite{wainwright_2019,vershynin2018high}. 
	  The goal of the present paper is to prove an exponential decay bound with computable constants for the difference  between the negative log-likelihood of discrete random variables   and the Shannon entropy, both defined on  a countably infinite alphabet.  
	
	Let $X$ be a discrete random variable on a countably infinite alphabet $\mathcal{X}=\{x_1,...,x_k,... \}$. Let $p_k=\mathbb{P}(X=x_k)$ be the probability mass at $x_k$. Assume, without loss of generality, that $p_k>0$ for each $k$; otherwise, simply remove  $x_k$ with $p_k=0$ from $\mathcal{X}$.  Let $P(X)$ be the probability mass function, which is a random variable with $P(X)=p_k$ if $X=x_k$, $k\geq 1$. Then  $\mathbb{E}[-\log P(X)]=-\sum_{k=1}^\infty p_k \log p_k$ is the Shannon entropy\footnote{Throughout the paper, ``log'' denotes the natural logarithm.}, which is a key concept in information theory \cite{shannon1948mathematical,cover2006elements}. Note that neither $P(X)$ nor the entropy depends on the elements in $\mathcal{X}$. In fact, $\mathcal{X}$ is not necessarily a set of real numbers. The set  can contain generic symbols such as letters, and is therefore named as alphabet.  
	
	 Entropy on countably infinite alphabets does not always have finite values. We give a simple sufficient condition ensuring its finiteness  at the beginning of Section \ref{sec:main}, which is also the key assumption for the main result of the paper. The readers are referred to \cite{baccetti2013infinite} for a more thorough discussion on conditions for finiteness of entropy  on countably infinite alphabets. 
	
	Let $X_1,...,X_n$ be independently and identically distributed (i.i.d.) copies of $X$. Then $\sum_{i=1}^n \log P(X_i)$ is the joint log-likelihood of $X_1,...,X_n$. By the weak law of large numbers, 
	\begin{align*}
\mathbb{P} \left ( \left  |\frac{1}{n}\sum_{i=1}^n \log P(X_i) -\mathbb{E}[\log P(X)] \right  | \geq  \epsilon \right ) \rightarrow 0,
	\end{align*}
	provided that the entropy is finite. This  result, particularly for the case of $|\mathcal{X}|$ being finite, is called the asymptotic equipartition property in the information theory literature, which is the foundation of many important results in this field \cite{cover2006elements,csiszar2011information}.

	In this paper, we  strengthen the above result  by proving a Bernstein-type bound for the case of countably infinite alphabets:
	\begin{align}
\mathbb{P} \left ( \left  |\frac{1}{n}\sum_{i=1}^n \log P(X_i) -\mathbb{E}[\log P(X)] \right  | \geq  \epsilon \right ) \leq 2 \exp \left (- \frac{n\epsilon^2}{c_1+c_2\epsilon} \right ),  \label{motivating}
	\end{align}
	where $c_1$ and $c_2$ are computable constants that depend on $\{p_k\}_{k\geq 1}$. 
	
	Concentration inequalities for entropy have been studied recently. Zhao \cite{zhao2020note} proved a Bernstein-type inequality for entropy on finite alphabets with convergence rate $(K^2\log K)/n=o(1)$, where $n$ is the sample size and $K$ is the size of the alphabet. Zhao \cite{zhao2020optimal} proved an exponential decay bound that improves the rate  to $ (\log K)^2/n=o(1)$ and showed that the new rate is optimal. Both papers  studied  inequalities  for finite alphabets while we focus on countably infinite alphabets in this work. In Section \ref{sec:main},  we prove \eqref{motivating} under a mild assumption. In Section \ref{sec:condition}, we show that this assumption holds if  the tail of $\{p_k\}_{k\geq 1}$  drops faster or on the same order of a discrete power-law distribution; conversely, the assumption cannot be satisfied if the tail drops slower than any power-law distribution.  Most commonly-used discrete distributions such as the Poisson distribution, the negative binomial distribution, and the power-law distribution itself satisfy this assumption. Furthermore, we propose a method to compute the constants in the bound \eqref{motivating}. 
	
\section{Main Result}\label{sec:main}
Our result requires only one assumption on $\{p_k\}_{k\geq 1}$: 

\textbf{Assumption 1. } There exists $0<r<1$ such that 
\begin{align*}
\sum_{k=1}^\infty p_k^{1-r} \leq C_r  <\infty. 
\end{align*} 
Assumption 1 implies that the tail of $\{p_k\}_{k\geq 1}$ cannot be too heavy, and in Section \ref{sec:condition} we will elaborate this assumption by showing  that the assumption holds if the tail of $\{p_k\}_{k\geq 1}$ is lighter than or on the same order of  a discrete power-law distribution; conversely, it cannot be satisfied if the tail is heavier than any  power-law distribution. 

First note that Assumption 1 ensures the finiteness of the entropy.
\begin{proposition} \label{thm:finiteness}
	Under Assumption 1, $\mathbb{E}[-\log P(X)]<\infty.$
\end{proposition}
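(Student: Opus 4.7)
The plan is to dominate each summand $-p_k \log p_k$ pointwise by a constant multiple of $p_k^{1-r}$, and then sum and invoke Assumption 1 directly. The key elementary inequality is $\log y \le y^r/r$ for all $y \ge 1$, which follows from applying $\log z \le z-1 \le z$ to $z = y^r \ge 1$. Since each $p_k \in (0,1]$, we may substitute $y = 1/p_k \ge 1$ to obtain
\begin{align*}
-\log p_k \;=\; \log(1/p_k) \;\le\; \frac{1}{r}\, p_k^{-r}.
\end{align*}

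Multiplying this pointwise bound by $p_k > 0$ yields $-p_k \log p_k \le p_k^{1-r}/r$. Summing over $k \ge 1$ and applying Assumption 1 then gives
\begin{align*}
\mathbb{E}[-\log P(X)] \;=\; \sum_{k=1}^\infty \bigl(-p_k \log p_k\bigr) \;\le\; \frac{1}{r}\sum_{k=1}^\infty p_k^{1-r} \;\le\; \frac{C_r}{r} \;<\; \infty,
\end{align*}
which is the claim.

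I do not expect any genuine obstacle: the argument reduces to a single elementary inequality comparing $\log(1/t)$ to $t^{-r}$. The mild point to record is that $p_k \le 1$ guarantees $1/p_k \ge 1$, so the inequality $\log y \le y^r/r$ applies throughout; the convergence of the bounding series is exactly the content of Assumption 1, and the nonnegativity of $-p_k \log p_k$ for $p_k \in (0,1]$ means there is no need to separate positive and negative parts when taking the expectation.
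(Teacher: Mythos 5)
Your proof is correct and takes essentially the same approach as the paper: both dominate $-p_k \log p_k$ pointwise by a constant multiple of $p_k^{1-r}$ and then invoke Assumption 1. The only difference is how the constant is obtained --- the paper maximizes $-t^r \log t$ over $[0,1]$ to get the sharper constant $1/(er)$, while your elementary inequality $\log y \le y^r/r$ yields $1/r$; either suffices for finiteness.
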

\begin{proof}
\begin{align*}
\mathbb{E}[-\log P(X)]=-\sum_{k=1}^\infty  p_{k} \log p_{k} \leq  \sum_{k=1}^\infty  p_k^{1-r} (-p_{k}^r \log p_{k} ) \leq \frac{1}{er}  \sum_{k=1}^\infty  p_k^{1-r}.
\end{align*}
The last inequality holds because $-p_{k}^r \log p_{k}$ on $[0,1]$ is maximized at $p_k=e^{-1/r}$. This result can be easily verified by comparing the function value at the  stationary point in $(0,1)$, which is unique for this function, with the values on the boundaries. Here we use the convention $q^r \log q=0$ at $q=0$, which makes the function  continuous on $[0,1]$ since $\lim_{q\rightarrow 0+} q^r \log q=0$.
\end{proof}
Let  
$
Y_i=\log P(X_i)-\mathbb{E}[\log P(X)]
$.
The key ingredient of the proof is to bound the moment generating function (MGF) of  $Y_i$, which is defined as
\begin{align*}
\mathbb{E}[e^{\lambda Y_i}]=  \left ( \sum_{k=1}^\infty p_{k}^{\lambda+1} \right ) \exp \left (-\lambda \sum_{k=1}^\infty p_{k} \log p_{k} \right ). 
\end{align*}
Denote the MGF of $Y_i$ by  $M_{Y_i}(\lambda)$. Under Assumption 1, $M_{Y_i}(\lambda)$ is finite for $|\lambda|<r$ because
\begin{align*}
\sum_{k=1}^\infty p_{k}^{\lambda+1} \leq \sum_{k=1}^\infty p_k^{1-r} < \infty.
\end{align*}
Conversely, if Assumption 1 does not hold then  $\sum_{k=1}^\infty p_{k}^{\lambda+1}$ diverges for all $\lambda<0$, because if $\sum_{k=1}^\infty p_{k}^{\lambda+1}$ converges for a certain negative $\lambda$  then it must be in the interval $(-1,0)$ and one can  take $r=-\lambda$. 

We now give the main result. 
\begin{theorem}[Main result]\label{thm:main}
	Under Assumption 1, that is, if there exists  $0<r<1$ such that 
	\begin{align*}
	\sum_{k=1}^\infty p_k^{1-r} \leq C_r  <\infty,
	\end{align*} 
	then for $|\lambda|< r$,
	\begin{align*}
	 M_{Y_i}(\lambda) \leq \exp \left (  \frac{ C_r \lambda^2}{r^2}\frac{1}{1-\frac{|\lambda|}{r}}  \frac{1}{2\sqrt{\pi}} \right ).
	\end{align*}
	Furthermore, for all $\epsilon>0$, 
	\begin{align}
	\mathbb{P} \left ( \left  |\frac{1}{n}\sum_{i=1}^n \log P(X_i) -\mathbb{E}[\log P(X)]  \right  | \geq  \epsilon \right ) \leq  2 \exp \left ( - \frac{n \epsilon^2}{2  C_r/(\sqrt{\pi}r^2)+2 \epsilon /r}  \right ). \label{core}
	\end{align}
\end{theorem}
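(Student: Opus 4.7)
The plan is to establish the moment generating function (MGF) bound first and then derive the concentration inequality via the standard Chernoff optimization. I will start from the factorization $\mathbb{E}[e^{\lambda Y_i}] = e^{\lambda H} f(\lambda)$ with $H := \mathbb{E}[-\log P(X)]$ and $f(\lambda) := \sum_k p_k^{1+\lambda}$. Writing $p_k^{1+\lambda} = p_k e^{\lambda\log p_k}$ and expanding in a power series, Fubini (justified since $\sum_k p_k^{1-|\lambda|} \leq C_r < \infty$ for $|\lambda| < r$ under Assumption 1) gives $f(\lambda) = \sum_{m=0}^\infty \lambda^m \alpha_m/m!$ with $\alpha_m := \sum_k p_k(\log p_k)^m$. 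Since $\alpha_0 = 1$ and $\alpha_1 = -H$, this becomes $f(\lambda) = 1 - \lambda H + R(\lambda)$ where $R(\lambda) := \sum_{m \geq 2} \lambda^m \alpha_m/m!$.

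The heart of the proof will be a sharp bound on $|\alpha_m|$. Writing $|\alpha_m| = \sum_k p_k^{1-r} \cdot p_k^r |\log p_k|^m$ and maximizing $p \mapsto p^r(-\log p)^m$ over $(0,1]$ (the unique interior stationary point $p^* = e^{-m/r}$ yields value $(m/er)^m$, by the same elementary calculus used in the proof of Proposition \ref{thm:finiteness}), I obtain $|\alpha_m| \leq C_r(m/er)^m$. Invoking Stirling's inequality $(m/e)^m \leq m!/\sqrt{2\pi m}$ then yields
\begin{align*}
|\alpha_m| \leq \frac{C_r\, m!}{r^m \sqrt{2\pi m}}, \qquad m \geq 2.
\end{align*}
The constant $1/(2\sqrt{\pi})$ appearing in the theorem is exactly $1/\sqrt{2\pi m}$ evaluated at $m = 2$, so this bound is sharp at the leading (variance) order and is the source of the $1/(2\sqrt\pi)$ in the final bound.

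Next I will sum the tail using $\sqrt{m} \geq \sqrt{2}$ for $m \geq 2$ together with the geometric series (valid because $|\lambda|/r < 1$):
\begin{align*}
|R(\lambda)| \leq \frac{C_r}{\sqrt{2\pi}\cdot\sqrt{2}} \sum_{m \geq 2} (|\lambda|/r)^m = \frac{C_r}{2\sqrt{\pi}} \cdot \frac{(\lambda/r)^2}{1-|\lambda|/r} =: Q(\lambda) \geq 0.
\end{align*}
Hence $f(\lambda) \leq 1 + [-\lambda H + Q(\lambda)]$, and the elementary inequality $1 + x \leq e^x$ (applied with $x = -\lambda H + Q(\lambda)$) collapses the linear term once I multiply by $e^{\lambda H}$, producing the claimed MGF bound $M_{Y_i}(\lambda) \leq e^{Q(\lambda)}$.

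For the concentration statement, Chernoff gives $\mathbb{P}(n^{-1}\sum_i Y_i \geq \epsilon) \leq \exp(-n\lambda\epsilon + n Q(\lambda))$ for $0 < \lambda < r$; minimizing the exponent in $\lambda$ is the standard Bernstein calculation with effective ``variance'' $a = C_r/(\sqrt\pi r^2)$ and ``scale'' $b = 1/r$, yielding optimum $\lambda^* = \epsilon/(a + b\epsilon) \in (0, r)$ and minimum value $-n\epsilon^2/(2(a + b\epsilon))$. Applying the same argument to $-Y_i$ and a union bound then produces \eqref{core}. The main obstacle I anticipate is precisely the $|\alpha_m|$ estimate: a naïve centering bound $|Y_i| \leq -\log P(X_i) + H$ would introduce a spurious factor $2^m$ into $\mathbb{E}|Y_i|^m$ and break the Bernstein radius $1/r$, so the proof must exploit the cancellation hidden inside $e^{\lambda H}(1 - \lambda H + R(\lambda))$ rather than passing through raw absolute moments of $Y_i$.
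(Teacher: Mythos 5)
Your proof is correct and follows essentially the same route as the paper: the same power-series expansion of $\sum_k p_k^{1+\lambda}$, the same maximization of $p^r|\log p|^m$ at $e^{-m/r}$, the same Stirling and geometric-series steps producing $Q(\lambda)$, and the same Chernoff optimization (your $1+x\leq e^x$ applied to $f(\lambda)$ is just the paper's $\log x\leq x-1$ in disguise). No gaps; your Fubini justification via $\sum_k p_k^{1-|\lambda|}\leq C_r$ is a clean way to state the interchange the paper also performs.
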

\begin{proof}
For $|\lambda|< r$, 
	\begin{align}
\log  M_{Y_i}(\lambda)  = & \,\,  \log \left (  \sum_{k=1}^\infty p_{k}^{\lambda+1} \right )-\lambda \sum_{k=1}^\infty p_{k} \log p_{k} \nonumber \\
\leq & \,\,  \sum_{k=1}^\infty p_{k}^{\lambda+1}-1-\lambda \sum_{k=1}^\infty p_{k} \log p_{k} \nonumber  \\
 = & \,\,  \sum_{k=1}^\infty p_{k} \exp(\lambda \log p_{k})-1-\lambda \sum_{k=1}^\infty p_{k} \log p_{k} \nonumber \\
 =  & \,\,\sum_{k=1}^\infty \left ( p_k+ \lambda p_{k} \log p_{k}+ \sum_{m=2}^\infty \frac{1}{m!} \lambda^m p_k (\log p_k)^m  \right ) -1-\lambda \sum_{k=1}^\infty p_{k} \log p_{k}, \label{log_mgf}
\end{align} 
where the  inequality follows from $\log x \leq x-1$ for $x>0$. 

For $m\geq 2$, it is easy to check that, the minimum of $p_{k}^r (\log p_{k})^m$ on $[0,1]$ when $m$ is an odd number, and the maximum  when $m$ is an even number,  are achieved at $e^{-m/r}$ by comparing the function value at the stationary point in $(0,1)$, which is unique,  with the values on the boundaries. Here we use the convention $q^r (\log q)^m=0$ at $q=0$ as before, which makes the function  continuous on $[0,1]$ since $\lim_{q\rightarrow 0+} q^r (\log q)^m=0$.

Therefore, for $m\geq 2$, 
\begin{align}
&  \,\, \left | \frac{1}{m!} \lambda^m p_k (\log p_k)^m  \right | \nonumber \\
\leq &  \,\, p_k^{1-r} \frac{1}{m!} |\lambda|^m  | p_{k}^r (\log p_{k})^m | \nonumber \\
\leq & \,\,  p_k^{1-r} \frac{1}{m!} |\lambda|^m e^{-m} \left ( \frac{m}{r} \right )^m \nonumber  \\
= & \,\, p_k^{1-r}  \frac{1}{m!} \left ( \frac{|\lambda|}{r} \right )^m  \left ( \frac{m}{e} \right )^m \nonumber \\
\leq &  \,\, p_k^{1-r} \frac{1}{m!}  \left ( \frac{|\lambda|}{r} \right )^m  \frac{m!}{\sqrt{2\pi m}} \nonumber \\
\leq & \,\, p_k^{1-r}  \left ( \frac{|\lambda|}{r} \right )^m  \frac{1}{2\sqrt{\pi}}, \label{radius}
\end{align}
where the second inequality is obtained by replacing $| p_{k}^r (\log p_{k})^m |$ with its maximum and the third inequality follows from Stirling's formula (see \cite{robbins1955remark} for example):
\begin{align*}
m! \geq \sqrt{2 \pi m}  \left ( \frac{m}{e} \right )^m , \,\, \textnormal{for } m\geq 1.
\end{align*} 

It follows that for $|\lambda|<r$, 
\begin{align*}
 \left | \sum_{m=2}^\infty \frac{1}{m!} \lambda^m p_k (\log p_k)^m \right |  \leq \sum_{m=2}^\infty \left | \frac{1}{m!} \lambda^m p_k (\log p_k)^m  \right |   \leq   p_k^{1-r}\sum_{m=2}^\infty   \left ( \frac{|\lambda|}{r} \right )^m \frac{1}{2\sqrt{\pi}} =   p_k^{1-r} \frac{\lambda^2}{r^2}\frac{1}{1-\frac{|\lambda|}{r}} \frac{1}{2\sqrt{\pi}},
\end{align*}
and
\begin{align*}
\sum_{k=1}^\infty \left | \sum_{m=2}^\infty \frac{1}{m!} \lambda^m p_k (\log p_k)^m \right | \leq C_r \frac{\lambda^2}{r^2}\frac{1}{1-\frac{|\lambda|}{r}}  \frac{1}{2\sqrt{\pi}}.
\end{align*}

Since the three terms  under $\sum_{k=1}^\infty$ in \eqref{log_mgf} all converge absolutely for $|\lambda|<r$, one can take the sum term by term. Therefore, for $|\lambda|<r$, 
\begin{align*}
\log  M_{Y_i}(\lambda) \leq \sum_{k=1}^\infty \sum_{m=2}^\infty \frac{1}{m!} \lambda^m p_k (\log p_k)^m \leq  C_r \frac{\lambda^2}{r^2}\frac{1}{1-\frac{|\lambda|}{r}} \frac{1}{2\sqrt{\pi}}, 
\end{align*}
and
\begin{align}
	 M_{Y_i}(\lambda) \leq \exp \left (  \frac{ C_r \lambda^2}{r^2}\frac{1}{1-\frac{|\lambda|}{r}}  \frac{1}{2\sqrt{\pi}} \right ). \label{mgf_bound}
\end{align}

The second part of the theorem  follows from a standard argument using the Chernoff bound, which can be found in Chapter 2 of \cite{wainwright_2019}. We give the details for completeness. For $t>0$ and $0<\lambda<r$,
\begin{align*}
\mathbb{P} \left (     \sum_{i=1}^n Y_i  \geq t  \right ) = \mathbb{P} \left ( e^{\lambda \sum_{i=1}^n Y_i}  \geq e^{\lambda t}  \right ) \leq \frac{\prod_{i=1}^n M_{Y_i}(\lambda)}{ e^{\lambda t} } \leq \exp \left \{  \frac{ n C_r \lambda^2}{r^2}\frac{1}{1-\frac{|\lambda|}{r}}  \frac{1}{2\sqrt{\pi}}-\lambda t \right  \}, \\
\end{align*}
where the first inequality is Markov's inequality and the second inequality follows from \eqref{mgf_bound}. By setting 
\begin{align*}
\lambda=\frac{t}{ n C_r/(\sqrt{\pi}r^2)+t/r} \in (0,r),
\end{align*}
we obtain
\begin{align*}
\mathbb{P} \left (     \sum_{i=1}^n Y_i  \geq t  \right ) \leq \exp \left ( - \frac{t^2}{2 n C_r/(\sqrt{\pi}r^2)+2t/r}  \right ). 
\end{align*}
The left tail bound can be obtained similarly by setting $\lambda=-\frac{t}{ n C_r/(\sqrt{\pi}r^2)+t/r}$. Therefore,
\begin{align*}
\mathbb{P} \left ( \left |\sum_{i=1}^n {Y_i} \right | \geq t \right ) \leq  2 \exp \left ( - \frac{t^2}{2 n C_r/(\sqrt{\pi}r^2)+2t/r}  \right ).
\end{align*}
Finally, letting $t=n\epsilon$,
\begin{align*}
\mathbb{P} \left ( \left |\frac{1}{n} \sum_{i=1}^n  {Y_i} \right | \geq \epsilon \right ) \leq 2 \exp \left ( - \frac{n \epsilon^2}{2  C_r/(\sqrt{\pi}r^2)+2 \epsilon /r}  \right ).
\end{align*}
\end{proof}

Theorem \ref{thm:main} can be generalized to $\{X_i\}_{i=1,...,n}$ with independent but non-identical distributions. Let $p_{ik}=\mathbb{P}(X_i=x_k)$ be the probability mass of $X_{i}$ at $x_k$ and  $\mathbb{E}[-\log P(X_i)]=-\sum_{k=1}^\infty p_{ik} \log p_{ik}$ be the entropy of $X_i$. Furthermore, redefine $Y_i$ and $M_{Y_i}(\lambda)$  accordingly. We have the following result for non-identical distributions: 
\begin{corollary}
If there exists  $0<r<1$ such that 
\begin{align*}
\sum_{k=1}^\infty p_{ik}^{1-r} \leq C_{r,i}  <\infty, \,\, i=1,...,n, 
\end{align*} 
	then for $|\lambda|< r$,
\begin{align*}
M_{Y_i}(\lambda) \leq \exp \left (  \frac{ C_{r,i} \lambda^2}{r^2}\frac{1}{1-\frac{|\lambda|}{r}}  \frac{1}{2\sqrt{\pi}} \right ).
\end{align*}
Furthermore, for all $\epsilon>0$, 
\begin{align*}
\mathbb{P} \left ( \left  |\frac{1}{n}\sum_{i=1}^n  \left ( \log P(X_i) -\mathbb{E}[\log P(X_i)] \right ) \right  | \geq  \epsilon \right ) \leq  2 \exp \left ( - \frac{n \epsilon^2}{2  \sum_{i=1}^n C_{r,i}/(n \sqrt{\pi}r^2)+2 \epsilon /r}  \right ).
\end{align*}
\end{corollary}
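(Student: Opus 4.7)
The plan is to observe that the derivation of the MGF bound in Theorem \ref{thm:main} treats only a single distribution at a time, so it applies to each $X_i$ separately. I would first fix $i$, invoke Assumption 1 with constant $C_{r,i}$, and replay the chain of inequalities leading to \eqref{mgf_bound}: the expansion of $\log M_{Y_i}(\lambda)$, the termwise bound $|\lambda^m p_{ik}(\log p_{ik})^m/m!| \leq p_{ik}^{1-r}(|\lambda|/r)^m/(2\sqrt{\pi})$ obtained via the critical-point analysis of $p^r(\log p)^m$ and Stirling's formula, absolute convergence for $|\lambda|<r$, and finally summation over $k$ using $\sum_k p_{ik}^{1-r}\leq C_{r,i}$. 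Because none of these steps couples different values of $i$, this immediately yields the claimed per-$i$ MGF bound with $C_r$ replaced by $C_{r,i}$.

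For the tail bound, I would exploit independence to turn the MGF of $\sum_{i=1}^n Y_i$ into a product. For $0<\lambda<r$, $\mathbb{E}[\exp(\lambda \sum_{i=1}^n Y_i)]=\prod_{i=1}^n M_{Y_i}(\lambda)$ is bounded by $\exp\bigl(\frac{\lambda^2}{r^2(1-|\lambda|/r)}\frac{1}{2\sqrt{\pi}}\sum_{i=1}^n C_{r,i}\bigr)$, so Markov's inequality gives $\mathbb{P}(\sum_i Y_i\geq t)\leq \exp\bigl(\frac{\lambda^2}{r^2(1-|\lambda|/r)}\frac{\sum_i C_{r,i}}{2\sqrt{\pi}}-\lambda t\bigr)$. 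The optimization over $\lambda$ is identical to the i.i.d.\ case with $nC_r$ replaced by $\sum_{i=1}^n C_{r,i}$: setting $\lambda=t/\bigl(\sum_i C_{r,i}/(\sqrt{\pi}r^2)+t/r\bigr)\in(0,r)$ collapses the exponent to $-t^2/\bigl(2\sum_i C_{r,i}/(\sqrt{\pi}r^2)+2t/r\bigr)$. The left tail is handled symmetrically with $-\lambda$, contributing a factor of $2$, and substituting $t=n\epsilon$ and then dividing numerator and denominator inside the exponent by $n$ produces the stated bound with $\sum_i C_{r,i}/(n\sqrt{\pi}r^2)$ in the denominator.

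The main obstacle is purely bookkeeping rather than new mathematics: I have to check that when the $X_i$ carry different constants $C_{r,i}$ the aggregate constant appearing in the Chernoff exponent is the arithmetic sum $\sum_i C_{r,i}$ (and not, say, $n\max_i C_{r,i}$), and that after the substitution $t=n\epsilon$ this sum appears exactly as $\sum_i C_{r,i}/(n\sqrt{\pi}r^2)$, i.e.\ averaged by $1/n$. Since $r$ is shared across all $i$ by hypothesis, the radius of validity $|\lambda|<r$ and the geometric-series factor $1/(1-|\lambda|/r)$ are preserved uniformly in $i$, so no further complication arises and the optimizing $\lambda$ lies in $(0,r)$ for every $t>0$.
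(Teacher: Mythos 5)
Your proposal is correct and follows exactly the route the paper intends: the paper's entire proof of this corollary is the remark that it is ``the same as of Theorem \ref{thm:main},'' and your write-up simply makes explicit the per-$i$ MGF bound with $C_{r,i}$, the product of MGFs under independence, and the replacement of $nC_r$ by $\sum_{i=1}^n C_{r,i}$ in the Chernoff optimization. The bookkeeping you flag (the aggregate constant being $\sum_i C_{r,i}$ and appearing as $\sum_i C_{r,i}/(n\sqrt{\pi}r^2)$ after setting $t=n\epsilon$) checks out.
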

The proof is the same as of Theorem \ref{thm:main}.

\section{Determining the Constants in the Bound}\label{sec:condition}
The radius of convergence $r$ in \eqref{radius} and  the upper bound $C_r$  for $\sum_{k=1}^\infty p_k^{1-r}$ are the only constants to be determined if one wants to use \eqref{core} as an effective upper bound for a given distribution $\{p_k\}_{k\geq 1}$. 

We first determine the types of distributions and the range of $r$ that can make $\sum_{k=1}^\infty p_k^{1-r}$ converge. Intuitively speaking,  for  distributions that satisfy Assumption 1, the tail of $\{p_k\}_{k\geq 1}$ cannot be too heavy.
 We make the above statement precise in the following theorem.
\begin{theorem}\label{thm:power}
The distribution $\{p_k\}_{k\geq 1}$ satisfies  Assumption 1 if the tail of $\{p_k\}_{k\geq 1}$ is lighter than or on the same order of  a discrete power-law distribution; conversely, Assumption 1 cannot be satisfied if the tail is heavier than any  power-law distribution. Specifically,
	\begin{enumerate}[label=(\roman*)]
		\item If 
		\begin{align*}
		\lim_{k \rightarrow \infty} \frac{p_k}{k^{-\alpha}} =0, \,\, \textnormal{for all } \alpha>1, 
		\end{align*}
		then 
		\begin{align*}
		\sum_{k=1}^\infty p_k^{1-r}  <\infty,\,\, \textnormal{for all } 0<r<1.
		\end{align*}
		\item If
		\begin{align*}
		0< \liminf_{k \rightarrow \infty} \frac{p_k}{k^{-\alpha}} \leq  \limsup_{k \rightarrow \infty} \frac{p_k}{k^{-\alpha}}<\infty, \,\, \textnormal{for some } \alpha>1,  
		\end{align*}
		then 
		\begin{align*}
		\sum_{k=1}^\infty p_k^{1-r}  <\infty, \,\, \textnormal{if and only if } 0 <r<\frac{\alpha-1}{\alpha}.
		\end{align*}
		\item If
		\begin{align*}
		\lim_{k \rightarrow \infty} \frac{p_k}{k^{-\alpha}}=\infty, \,\, \textnormal{for all } \alpha>1, 
		\end{align*}
		then 
		\begin{align*}
		\sum_{k=1}^\infty p_k^{1-r}  = \infty, \,\, \textnormal{for all } 0<r<1.
		\end{align*}
	\end{enumerate}
\end{theorem}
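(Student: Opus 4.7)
All three parts reduce to a single comparison: since $\sum_{k\geq 1} k^{-\beta}$ converges iff $\beta>1$ and $(k^{-\alpha})^{1-r}=k^{-\alpha(1-r)}$, the summability of $p_k^{1-r}$ is governed by whether $\alpha(1-r)>1$, i.e., whether $r<(\alpha-1)/\alpha$. My plan is to treat each case by selecting an appropriate $\alpha$ and applying a one-sided tail comparison with a $p$-series.

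For part (i), I would fix $r\in(0,1)$ and pick any $\alpha>1/(1-r)$ (for concreteness, $\alpha=2/(1-r)$); this is permitted because the hypothesis allows arbitrarily large $\alpha$. Such a choice forces $\alpha(1-r)>1$. The hypothesis $p_k=o(k^{-\alpha})$ then yields $p_k\le k^{-\alpha}$ for all $k$ beyond some $K_0$, hence $p_k^{1-r}\le k^{-\alpha(1-r)}$ tail-wise, and $\sum_{k\ge K_0}k^{-\alpha(1-r)}<\infty$ by the $p$-series test. The finitely many initial terms contribute a finite amount.

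For part (ii), the two-sided asymptotic $p_k\asymp k^{-\alpha}$ provides constants $0<c_1\le c_2<\infty$ and an index $K_0$ with $c_1 k^{-\alpha}\le p_k\le c_2 k^{-\alpha}$ for $k\ge K_0$. Raising to the power $1-r>0$ sandwiches the tail of $\sum p_k^{1-r}$ between constant multiples of $\sum_{k\ge K_0} k^{-\alpha(1-r)}$, so the series converges iff $\alpha(1-r)>1$, which rearranges to $r<(\alpha-1)/\alpha$. For part (iii), I would fix $r\in(0,1)$ and pick any $\alpha$ with $1<\alpha<1/(1-r)$; this interval is nonempty because $r>0$ forces $1/(1-r)>1$. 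Then $\alpha(1-r)<1$, and the hypothesis $p_k/k^{-\alpha}\to\infty$ gives $p_k\ge k^{-\alpha}$ for all sufficiently large $k$. Thus $p_k^{1-r}\ge k^{-\alpha(1-r)}$, and comparison with the divergent $p$-series $\sum k^{-\alpha(1-r)}$ yields divergence.

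The main obstacle is purely bookkeeping rather than technical: one must check in each case that the chosen $\alpha$ lies in the range allowed by the hypothesis. This reduces to the elementary inequalities $1/(1-r)>1$ for $r\in(0,1)$ and $(\alpha-1)/\alpha<1$ for $\alpha>1$, so no tool beyond the $p$-series criterion and a direct comparison test is needed.
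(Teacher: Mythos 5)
Your proof is correct, and for parts (i) and (ii) it coincides with the paper's own argument: choose $\alpha>1/(1-r)$ and compare with a convergent $p$-series, respectively sandwich $p_k$ between constant multiples of $k^{-\alpha}$ for large $k$. The only divergence is in part (iii): the paper argues by contradiction, using the fact that convergence of $\sum_k p_k^{1-r}$ forces $\liminf_{k} p_k^{1-r}/k^{-1}=0$ and hence $\liminf_k p_k/k^{-1/(1-r)}=0$, which contradicts the hypothesis at the exponent $1/(1-r)>1$; you instead pick $1<\alpha<1/(1-r)$ so that $\alpha(1-r)<1$ and run a direct comparison with the divergent series $\sum_k k^{-\alpha(1-r)}$. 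Both routes are equally elementary and rest on the same $p$-series threshold; your version has the small advantage of treating all three parts by one uniform comparison template, while the paper's contradiction avoids having to select a specific $\alpha$ in the heavy-tailed case.
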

\begin{proof}
	Recall that  $\sum_{k=1}^\infty k^{-\beta}$ converges for $\beta>1$, and diverges for $\beta\leq 1$. Statement (\romannumeral 1) is obvious by taking $\alpha>1/(1-r)$.  Statement (\romannumeral 2) is also obvious by noticing that the assumption implies that there exist positive constants $a_1,a_2$ such that $ a_1 k^{-\alpha} \leq p_k\leq a_2 k^{-\alpha}$ for sufficiently large $k$.   We prove (\romannumeral 3) by contradiction. If there exists $0<r<1$ such that $\sum_{k=1}^\infty p_k^{1-r}  < \infty$, then 
	\begin{align*}
	\liminf_{k \rightarrow \infty} \frac{p_k^{1-r}}{k^{-1}} =0.
	\end{align*}
	It implies
	\begin{align*}
	\liminf_{k \rightarrow \infty} \frac{p_k}{k^{-1/(1-r)}} =0,
	\end{align*}
	which contradicts the assumption since $1/(1-r)>1$. 
\end{proof}

Theorem \ref{thm:power} implies that there are a wide class of discrete distributions satisfying Assumption 1,  including the most commonly-used ones such as the Poisson distribution, the negative binomial distribution, and the power-law distribution itself. The class even contains certain discrete random variables that do not have finite expectations. In fact, if $X$ follows a discrete power-law distribution with $1<\alpha \leq 2$ then  $\mathbb{E}[X]=\infty$ since $\sum_{k=1}^\infty  k^{-(\alpha-1)}$ diverges. But such distributions satisfy Assumption 1 by Theorem \ref{thm:power} (\romannumeral 2).

\noindent \textit{Remark.} It may be surprising, at first glance, to get an exponential decay bound for a power-law distribution, which itself is heavy-tailed. But note that \eqref{core} is a  concentration bound for $\log P(X)$, not for $X$. The log-likelihood $\log P(X)$ is typically better-behaved than $X$ that takes values on non-negative integers and follows a heavy-tailed distribution. For example, for a power-law distribution with $1<\alpha \leq 2$, $\mathbb{E}[X]=\infty$; on the contrary, the entropy $\mathbb{E}[-\log P(X)]$ is finite by Proposition \ref{thm:finiteness} and Theorem \ref{thm:power} (\romannumeral 2).  This phenomenon can be explained by noticing that $-\log (k^{-\alpha})$ grows much slower than $k$. Moreover, the MGF of $X$ is infinite if $X$ follows a power-law distribution while the MGF of $\log P(X)$ is finite. The tail of $\log P(X)$ is not heavy in this sense, which makes \eqref{core} possible. 

Finally, we discuss how to compute $C_r$ after $r$ is selected  by Theorem \ref{thm:power}. In practice, one can compute the partial sum of $\sum_{k=1}^\infty p_k^{1-r}$ until the increment is negligible. The value obtained in this way, however, is a lower bound for $\sum_{k=1}^\infty p_k^{1-r}$ and a generic truncation error bound does not exist for positive infinite series because in principle, the  tail behavior cannot be predicted by a finite number of terms\footnote{This issue is minor in practice especially when $p_k$ drops exponentially. The series $\sum_{k=1}^\infty p_k^{1-r}$  usually converges very fast in this case. It is nothing wrong to take the partial sum until the increment is negligible. The method in Theorem \ref{thm:c_r} is useful to someone who needs a rigorous upper bound.}. 

 If the tail of $\{p_k \}_{k\geq 1}$ is dominated by a power-law distribution, we propose a method that can compute an upper bound for $\sum_{k=1}^\infty p_k^{1-r}$ at any tolerance level. Specifically, the next theorem shows how to compute an upper bound $C_r$ for $\sum_{k=1}^\infty p_k^{1-r}$ with $|\sum_{k=1}^\infty p_k^{1-r}-C_r|$ smaller than a pre-specified tolerance level if we find $k_0$ such that $p_k \leq c_0 k^{-\alpha}$ for $k>k_0$. Note that such $k_0$ exists  if $\{p_k\}_{k\geq 1}$ satisfies the condition in (\romannumeral 1) or (\romannumeral 2) in Theorem \ref{thm:power}.

\begin{theorem}\label{thm:c_r}
	Suppose $k_0$ is a positive integer such that $p_k \leq c_0 k^{-\alpha}$  for certain $\alpha>1$ and all $k>k_0$, where $c_0>0$. Pick $r$ such that $0<r< (\alpha-1)/\alpha$. For all $\epsilon>0$, let 
	\begin{align*}
	k_1=\max \left \{ k_0, \left \lceil \left ( \frac{\epsilon(\alpha(1-r)-1)}{c_0} \right )^{-\frac{1}{\alpha(1-r)-1}} \right \rceil \right \},
	\end{align*}
	 where $\lceil \cdot \rceil$ means rounding up to the next integer.
	Then 
	\begin{align*}
	C_r=\sum_{k=1}^{k_1} p_k^{1-r}+\epsilon
	\end{align*} 
	satisfies
	\begin{align*}
	0 \leq  C_r-\sum_{k=1}^\infty p_k^{1-r}  \leq \epsilon. 
	\end{align*}
\end{theorem}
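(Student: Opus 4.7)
The claim splits into two halves, $0 \le C_r - \sum_{k=1}^\infty p_k^{1-r}$ and $C_r - \sum_{k=1}^\infty p_k^{1-r} \le \epsilon$. The second half is immediate from the definition of $C_r$ and the nonnegativity of every $p_k^{1-r}$: indeed $C_r = \sum_{k=1}^{k_1} p_k^{1-r} + \epsilon \le \sum_{k=1}^\infty p_k^{1-r} + \epsilon$. So all the work lies in the first half, which is equivalent to controlling the tail
\begin{align*}
T \;:=\; \sum_{k=k_1+1}^\infty p_k^{1-r} \;\le\; \epsilon.
\end{align*}

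The plan is to bound $T$ by an integral. Because $k_1 \ge k_0$ by construction, the power-law dominance $p_k \le c_0 k^{-\alpha}$ applies for every $k \ge k_1+1$, and raising both sides to the power $1-r > 0$ gives $p_k^{1-r} \le c_0^{1-r} k^{-\alpha(1-r)}$. The restriction $0 < r < (\alpha-1)/\alpha$ is equivalent to $\alpha(1-r) > 1$, which is exactly what makes $x \mapsto x^{-\alpha(1-r)}$ integrable at infinity and monotonically decreasing. I would then invoke the standard integral-comparison bound to obtain
\begin{align*}
T \;\le\; c_0^{1-r} \sum_{k=k_1+1}^\infty k^{-\alpha(1-r)} \;\le\; c_0^{1-r} \int_{k_1}^\infty x^{-\alpha(1-r)}\,dx \;=\; \frac{c_0^{1-r}\,k_1^{-(\alpha(1-r)-1)}}{\alpha(1-r)-1}.
\end{align*}

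It remains to verify that the prescribed $k_1$ drives this upper bound below $\epsilon$. Inverting the inequality $c_0^{1-r} k_1^{-(\alpha(1-r)-1)}/(\alpha(1-r)-1) \le \epsilon$ for $k_1$ (and absorbing $c_0^{1-r} \le c_0$, which is the natural way to match the form stated in the theorem) gives the threshold
\begin{align*}
k_1 \;\ge\; \left(\frac{\epsilon(\alpha(1-r)-1)}{c_0}\right)^{-1/(\alpha(1-r)-1)},
\end{align*}
which is exactly the second entry in the $\max$. The outer ceiling only enlarges $k_1$, so the required inequality is preserved, and the $\max$ with $k_0$ simultaneously guarantees that the power-law bound is active from $k_1+1$ onward.

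I do not expect any real obstacle here; the only bookkeeping is to make sure the series is upper-bounded by an integral with the correct lower limit (integrating from $k_1$, not $k_1+1$, because the integrand is decreasing), and to invert the resulting algebraic inequality correctly so that the ceiling operation goes in the safe direction.
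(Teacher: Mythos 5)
Your proof is correct and follows essentially the same route as the paper's: bound the tail $\sum_{k>k_1}p_k^{1-r}$ by a constant times the tail of $\sum_k k^{-\alpha(1-r)}$, compare that with $\int_{k_1}^\infty x^{-\alpha(1-r)}\,dx$ (integrating from $k_1$, as you note), and invert the resulting algebraic inequality to recover the prescribed $k_1$; the remaining half of the sandwich is, as you say, immediate. One caution: your absorption step $c_0^{1-r}\le c_0$ is valid only when $c_0\ge 1$, since for $0<c_0<1$ and $0<1-r<1$ the inequality reverses. The paper's own proof makes the same substitution silently by writing $p_k^{1-r}\le c_0k^{-\alpha(1-r)}$ directly, so this is a shared blemish rather than a gap relative to the paper; it is harmless because one may replace $c_0$ by $\max(c_0,1)$ without loss of generality (the hypothesis $p_k\le c_0k^{-\alpha}$ then still holds and the prescribed $k_1$ only increases), or equivalently keep $c_0^{1-r}$ in the definition of $k_1$.
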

\begin{proof} We only need to bound the tail probability for $k>k_1$.  
		\begin{align*}
		\sum_{k=k_1+1}^\infty  p_k^{1-r} & \leq  c_0 \sum_{k=k_1+1}^\infty k^{-\alpha(1-r)} \\
		& = c_0 \sum_{k=k_1}^\infty \int_{k}^{k+1} (k+1)^{-\alpha(1-r)} \,\, dx \\
		& \leq c_0 \int_{k_1}^\infty x^{-\alpha(1-r)} \,\, dx \\
		& = \frac{c_0}{\alpha(1-r)-1} k_1^{-(\alpha(1-r)-1)} \leq \epsilon,  \\
		\end{align*}
	where the  first inequality holds because $p_k \leq c_0 k^{-\alpha}$ for all $k>k_0$ and the last inequality holds because $k_1\geq \left \lceil \left ( \frac{\epsilon(\alpha(1-r)-1)}{c_0} \right )^{-\frac{1}{\alpha(1-r)-1}} \right \rceil$. 
	
	Therefore, 
	\begin{align*}
	\sum_{k=1}^\infty p_k^{1-r} & = \sum_{k=1}^{k_1} p_k^{1-r}+\sum_{k=k_1+1}^\infty  p_k^{1-r} \leq \sum_{k=1}^{k_1} p_k^{1-r}+\epsilon. 
	\end{align*}
\end{proof}	
	
	\section*{Acknowledgements}
	This research was supported by the National Science Foundation grant DMS-1840203.
	
 \newcommand{\noop}[1]{}

\end{document}